\documentclass[journal, 10pt, compsoc]{IEEEtran}

\usepackage[english]{babel}


\usepackage{algorithm}
\usepackage{algpseudocode}
\usepackage{algorithmicx}
\usepackage{amsmath}
\usepackage[numbers]{natbib}

\usepackage{amsthm}
\usepackage{amssymb}
\usepackage{mathtools}
\usepackage{bbm}
\usepackage{subfigure}

\usepackage{ulem}

\newtheorem{thm}{Theorem}
\newtheorem{defi}{Definition}

\newtheorem{lem}{Lemma}

\newtheorem{rmk}{Remark}
\newtheorem{exmpl}{Example}

%


\usepackage{tikz}
\usepackage{amsmath}


\title{Correction to ``Local Information Privacy and Its Applications to Data Aggregation"}
\author{Bo Jiang,~\IEEEmembership{Student Member,~IEEE,}
        Ming Li,~\IEEEmembership{Fellow, IEEE,}
        and~Ravi~Tandon,~\IEEEmembership{Senior Member,IEEE}
\IEEEcompsocitemizethanks{\IEEEcompsocthanksitem Bo Jiang, Ming Li and Ravi Tandon are with the Department
of Electrical and Computer Engineering, University of Arizona, Tucson,
AZ, 85721.\protect\\
E-mail: bjiang@email.arizona.edu, lim@email.arizona.edu, tandonr@email.arizona.edu}}

\begin{document}
\maketitle

\begin{abstract}
In our previous works \cite{LIP1,LIP2, jiang2018context}, we defined $(\epsilon,\delta)$-Local Information Privacy (LIP) as a context-aware privacy notion and presented the corresponding privacy-preserving mechanism. Then we claim that the mechanism satisfies $(\epsilon,0)$-LIP for any $\epsilon>0$ for arbitrary $P_X$. However, this claim is not completely correct. In this document, we provide a correction to the valid range of privacy parameters of our previously proposed LIP mechanism. Further, we propose efficient algorithms to expand the range of valid privacy parameters. Finally, we discuss the impact of updated results on our original paper's experiments, the rationale of the proposed correction and corrected results.

\end{abstract}

\section{valid range of privacy parameters}

\begin{defi} [$(\epsilon,\delta)$-Local Information Privacy]\label{def:LIP1}
A mechanism $\mathcal{M}$  satisfies $(\epsilon,\delta)$-LIP for some $\epsilon\in{\mathbf{R}^+}$ and $\delta\in[0,1]$, if $\forall{S_x\in{\mathcal{X}}}$, $S_y\in{\textit{Range}(\mathcal{M})}$:

\begin{equation}\label{cons0}
\begin{aligned}
    & \operatorname{Pr}(Y\in\mathcal{S}_y) \geq e^{-\epsilon}\operatorname{Pr}(Y\in{\mathcal{S}_y}|X\in\mathcal{S}_x)-\delta, \\
    &\operatorname{Pr}(Y\in\mathcal{S}_y)\le{e^{\epsilon}\operatorname{Pr}(Y\in{\mathcal{S}_y}|X\in\mathcal{S}_x)}+\delta.
\end{aligned}
\end{equation}
\end{defi}
Then, in Theorem 2 of \cite{jiang2018context}, we proposed the binary context-aware randomize response (RR) mechanism. Later in Theorem 3 of \cite{LIP2}, we extended this mechanism to the $M$-ary case. Then, in \cite{LIP1}, we mentioned this mechanism in Section IV-B. Specifically, the mechanism can be described as follows: let $P_X$ denote the prior distribution of the input data $X$. Then the context-aware randomize response mechanism takes $X$ as input and releases $Y$ as follows:

\begin{equation}\label{rr-basic}
\begin{aligned}
     q_{xy} = \begin{cases}  \frac{P_X(y)}{e^{\epsilon}}, & x\neq y, x, y\in \mathcal{X},   \\ 1- \frac{(1-P_X(x))}{e^{\epsilon}}
   , & x= y, x, y\in \mathcal{X}. 
   \end{cases} 
\end{aligned}
\end{equation}
We use $q_{xy}$ to denote the probability of releasing $y$ as the output when the input is $x$.
Then in our previous work of \cite{LIP1,LIP2, jiang2018context}, we claim that the mechanism defined in \eqref{rr-basic} satisfies $(\epsilon,0)$-LIP or pure LIP for any $\epsilon\ge0$ and for arbitrary $P_X$.

Specifically, in the next Theorem, we show that the mechanism satisfies $(\epsilon,0)$-LIP as long as $\epsilon$ is larger than a threshold.

\begin{thm}
   For a given probability distribution of $P_X$, with $P_{\min} = \min_x P_{X}(x)$, the mechanism defined in \eqref{rr-basic} satisfies $(\epsilon,0)$- LIP, as long as 
   \begin{equation*}
       \epsilon \ge \log \left(\frac{1}{P_{\min}} -1\right) ~\text{or}~~\epsilon = 0.
   \end{equation*}
\end{thm}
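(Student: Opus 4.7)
The plan is to reduce the set-based condition in Definition \ref{def:LIP1} (with $\delta = 0$) to a pointwise check on the ratios $q_{xy}/\Pr(Y=y)$, compute the output marginal explicitly, and then do a two-case analysis on whether $x = y$.

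First, I would rewrite the two inequalities in \eqref{cons0} with $\delta=0$ as the single requirement $\Pr(Y\in\mathcal{S}_y\mid X\in\mathcal{S}_x)/\Pr(Y\in\mathcal{S}_y)\in[e^{-\epsilon},e^{\epsilon}]$, and then argue that it suffices to verify this for singletons. Two convex-combination observations give this reduction: (i) $\Pr(Y\in\mathcal{S}_y\mid X\in\mathcal{S}_x)$ is a convex combination, with weights $P_X(x)/\Pr(X\in\mathcal{S}_x)$, of the conditional probabilities $\Pr(Y\in\mathcal{S}_y\mid X=x)$ for $x\in\mathcal{S}_x$; and (ii) for any fixed $x$, the ratio $\sum_{y\in\mathcal{S}_y}q_{xy}\big/\sum_{y\in\mathcal{S}_y}\Pr(Y=y)$ is a weighted average of the singleton ratios $q_{xy}/\Pr(Y=y)$ with weights $\Pr(Y=y)/\Pr(Y\in\mathcal{S}_y)$. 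Hence it is enough to show $q_{xy}/\Pr(Y=y)\in[e^{-\epsilon},e^{\epsilon}]$ for all $x,y\in\mathcal{X}$.

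Second, I would compute the output marginal directly from \eqref{rr-basic}: $\Pr(Y=y)=P_X(y)\,q_{yy}+\sum_{x\neq y}P_X(x)\,q_{xy}$, and the two $(1-P_X(y))/e^{\epsilon}$ contributions cancel so that $\Pr(Y=y)=P_X(y)$. The target ratio therefore simplifies to $q_{xy}/P_X(y)$. In the off-diagonal case $x\neq y$ this ratio equals $e^{-\epsilon}$ exactly, meeting both bounds for every $\epsilon\ge 0$. The binding case is $x=y$: the upper bound $q_{yy}/P_X(y)\le e^{\epsilon}$, after clearing denominators and factoring $e^{2\epsilon}-1=(e^{\epsilon}-1)(e^{\epsilon}+1)$, collapses to $P_X(y)(e^{\epsilon}+1)\ge 1$, i.e.\ $\epsilon\ge\log(1/P_X(y)-1)$; the matching lower bound reduces to $e^{\epsilon}\ge 1$ and is automatic. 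Taking the worst case over $y$ yields the stated threshold $\epsilon\ge\log(1/P_{\min}-1)$.

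Finally, the boundary $\epsilon=0$ is treated separately: substituting $\epsilon=0$ into \eqref{rr-basic} gives $q_{xy}=P_X(y)$ for all $x,y$, so $Y$ is independent of $X$ and $(0,0)$-LIP holds trivially. I expect the only real subtlety to be writing out the singleton reduction cleanly; once that is in place, the rest is routine algebra on the diagonal entries of \eqref{rr-basic}.
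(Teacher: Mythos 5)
Your proposal is correct and follows essentially the same route as the paper: establish that the output marginal equals the prior (the paper's Lemma~\ref{lem1}), reduce to the singleton ratio, and split on $x=y$ versus $x\neq y$, with the diagonal case yielding the threshold $P_X(y)\ge 1/(e^{\epsilon}+1)$ and hence $\epsilon\ge\log(1/P_{\min}-1)$. The only difference is that you explicitly justify the reduction from general events $\mathcal{S}_x,\mathcal{S}_y$ to singletons via the two convex-combination observations, a step the paper simply asserts; that is a welcome addition rather than a divergence.
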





\begin{rmk}

We note that for any distribution $P_X$, $P_{\min}$ satisfies the inequality $P_{\min}\le 1/|\mathcal{X}|$. This implies that other than the special case of $\epsilon=0$, the smallest value of $\epsilon$ supported by our mechanism is $\log(|X|-1)$, which is only possible when $P_X$ is uniformly distributed. In general, when $P_X$ is not uniform, the smallest value of $\epsilon$ is a function of $P_{\min}$, which is larger than $\log(|\mathcal{X}| -1)$.
\end{rmk}

\begin{figure}[t]
\centering 
\subfigure[$\epsilon$-$\delta$ tradeoff for different $P_{\min}$.]
{\includegraphics[width=0.48\textwidth]{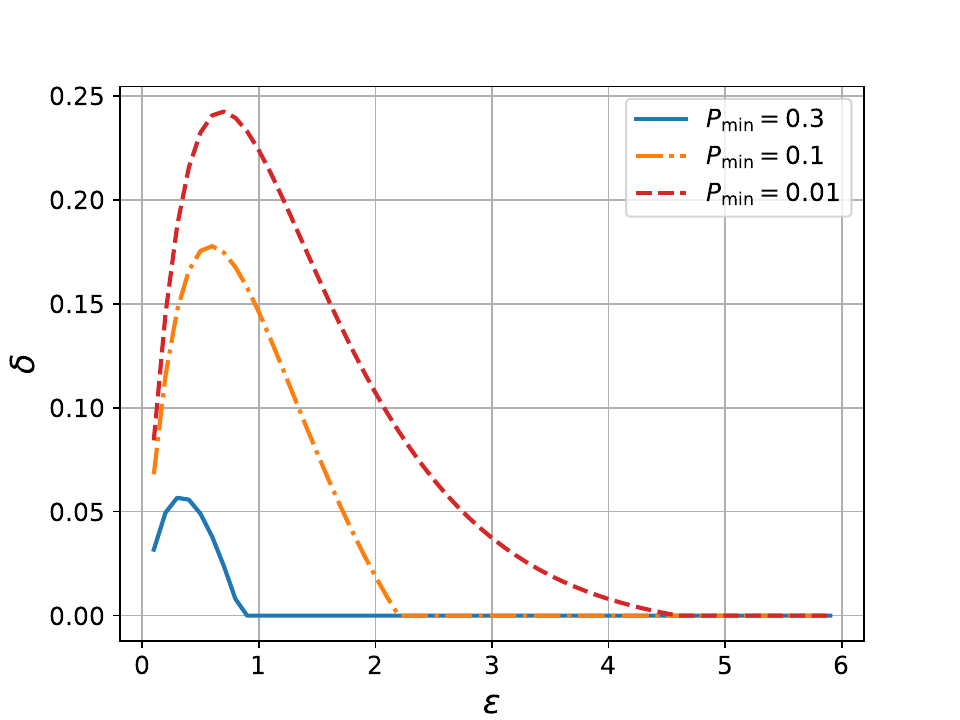}
\label{ep.vs_delta}}
\subfigure[$P_{\min}$-$\epsilon$ tradeoff for different $\delta$.]
{\includegraphics[width=0.48\textwidth]{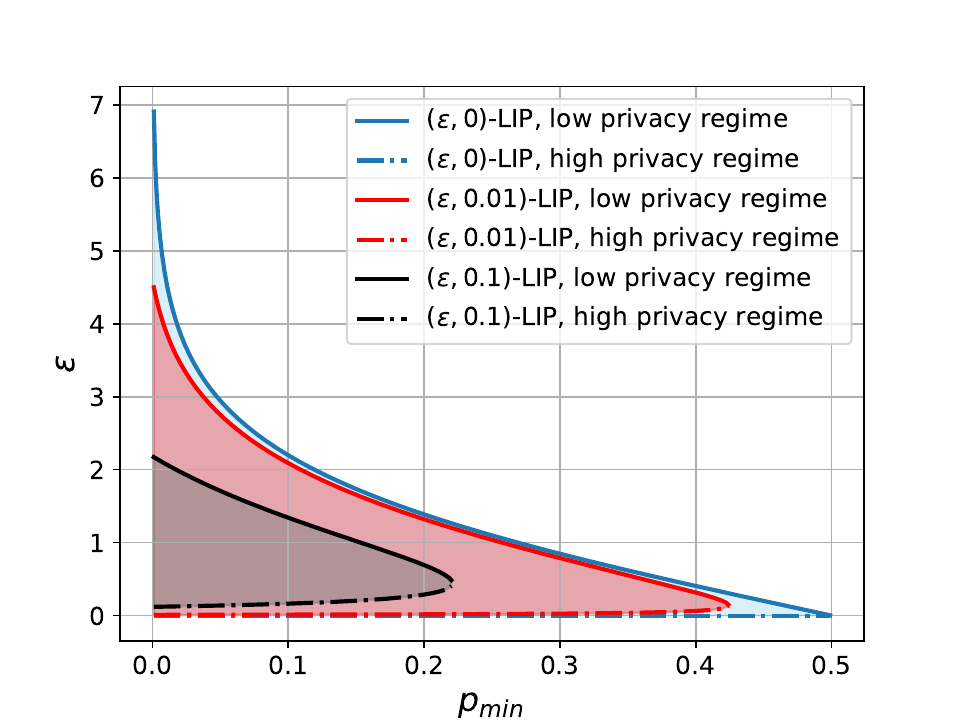}
\label{pmin_vs_ep.}}
\caption{When the context-aware randomize response mechanism achieves $(\epsilon,\delta)$-LIP, The relationship between $P_{\min}$ in the prior distribution and $\epsilon,\delta$. Shaded area represents the infeasible region of $\epsilon$ under $\delta$ and $P_{\min}$.}
\label{fig:var_ours_cms}
\end{figure}


We note that if the desired $\epsilon$ is smaller than $\log(1/P_{\min}-1)$, then  by allowing some failure probability in the definition to achieve a relaxed  $(\epsilon,\delta)$-LIP, the desired value of $\epsilon$ can be achieved. The $\epsilon$, $\delta$ tradeoff with a given distribution is given in the next Theorem.
\begin{thm}
   For a given probability distribution of $X$, with $P_{\min} = \min_x P_{X}(x)$, the context-aware randomized response mechanism achieves $(\epsilon, \delta)$- LIP, where 
   \begin{equation}
    \delta \ge \frac{(e^{\epsilon} -1)[1 - P_{\min}(e^{\epsilon} + 1)]}{e^{2\epsilon}}.
\end{equation}
\end{thm}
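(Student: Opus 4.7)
The plan is to reduce the $(\epsilon,\delta)$-LIP condition to a scalar optimization by writing the one-sided deficit $e^{-\epsilon}\Pr(Y\in S_y\mid X\in S_x) - \Pr(Y\in S_y)$ in closed form and then maximizing it over $(S_x, S_y)$. The key structural input is that the mechanism in \eqref{rr-basic} is marginal-preserving: summing $\sum_{x'} P_X(x')\,q_{x'y}$ using the two cases of \eqref{rr-basic} telescopes to $P_X(y)$, so $\Pr(Y\in S_y) = P_X(S_y)$ throughout the argument.

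For the conditional, I would write $A = S_x$, $B = S_y$, and $p = P_X(B\mid A)$, and split $\sum_{x\in A} P_X(x)\sum_{y\in B} q_{xy}$ according to whether $x\in B$ or $x\notin B$. A short computation using only \eqref{rr-basic} gives $\sum_{y\in B} q_{xy} = 1 - (1 - P_X(B))/e^{\epsilon}$ in the first case and $P_X(B)/e^{\epsilon}$ in the second, aggregating (after division by $P_X(A)$) to
\begin{equation*}
\Pr(Y\in B \mid X\in A) \;=\; p + \frac{P_X(B) - p}{e^{\epsilon}}.
\end{equation*}
Both directions in Definition~\ref{def:LIP1} can then be checked directly. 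The upper inequality reduces to $\Pr(Y\in B) - e^{\epsilon}\Pr(Y\in B\mid X\in A) = -p(e^{\epsilon}-1) \le 0$, so it is satisfied with $\delta = 0$. The lower inequality gives the deficit
\begin{equation*}
e^{-\epsilon}\Pr(Y\in B \mid X\in A) - \Pr(Y\in B) \;=\; \frac{e^{\epsilon}-1}{e^{2\epsilon}}\,[\,p - P_X(B)(e^{\epsilon}+1)\,],
\end{equation*}
and $\delta$ must dominate the supremum of this expression over all feasible $(A,B)$.

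The main obstacle is the final combinatorial maximization. Since the bracket is increasing in $p$ and decreasing in $P_X(B)$, I would argue that the worst case is realized when both are pushed to their respective extremes simultaneously. Because $p = P_X(B\mid A) \le 1$ with equality whenever $A\subseteq B$, the choice $A = B$ always achieves $p = 1$; within that regime the smallest value of $P_X(B)$ over non-empty $B$ is $P_{\min}$, attained by the singleton $B = \{x^\star\}$ with $P_X(x^\star) = P_{\min}$. Substituting $p = 1$ and $P_X(B) = P_{\min}$ into the deficit recovers the stated bound. As a consistency check, this worst-case deficit becomes non-positive exactly when $P_{\min}(e^{\epsilon}+1)\ge 1$, i.e.\ $\epsilon \ge \log(1/P_{\min}-1)$, which matches the pure-LIP threshold of Theorem~1 and so confirms that the two theorems glue together at the boundary.
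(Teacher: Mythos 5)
Your proposal is correct, and it takes a genuinely more general route than the paper's own proof. The paper verifies Definition~\ref{def:LIP1} only for singleton events, splitting into the two cases $y\neq x$ and $y=x$ and showing the lower inequality at $y=x$ is the binding one; your unified formula $\operatorname{Pr}(Y\in B\mid X\in A)=p+\frac{P_X(B)-p}{e^{\epsilon}}$ with $p=P_X(B\mid A)$ recovers exactly those two cases at $p=0$ and $p=1$, so the two arguments agree on the worst-case deficit $\frac{(e^{\epsilon}-1)[1-P_{\min}(e^{\epsilon}+1)]}{e^{2\epsilon}}$. What your version buys is non-trivial: Definition~\ref{def:LIP1} quantifies over events $\mathcal{S}_x,\mathcal{S}_y$, and for an additive-$\delta$ guarantee the singleton case does not automatically imply the set case --- summing the pointwise bound $P_Y(y)\ge e^{-\epsilon}P_{Y\mid X}(y\mid x)-\delta$ over $y\in\mathcal{S}_y$ inflates the slack to $|\mathcal{S}_y|\delta$. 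Your direct computation shows the inflation never materializes here, because the deficit $\frac{e^{\epsilon}-1}{e^{2\epsilon}}[\,p-P_X(B)(e^{\epsilon}+1)\,]$ is bounded by taking $p\le 1$ and $P_X(B)\ge P_{\min}$ separately, with equality at $A=B=\{x^{\star}\}$; so the supremum over all event pairs is attained at a singleton after all. In that sense your proof establishes the theorem in the generality the definition literally demands, whereas the paper's proof leaves the singleton-to-set step implicit. The final maximization step and the consistency check against the Theorem~1 threshold $\epsilon\ge\log(1/P_{\min}-1)$ are both sound.
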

\begin{rmk}
We note that by introducing the failure probability $\delta$, the valid range of $\epsilon$ enlarges to:

Case 1: $\delta < \frac{1}{4(1-P_{\min})}-P_{\min} $:
\begin{small}
\begin{equation*}
\begin{aligned}
   \epsilon\in &\left[0,\log\left(\frac{1 - \sqrt{1-4(\delta+P_{\min})(1-P_{\min})}}{2(\delta +P_{\min})}\right)\right]\\
   \cup &\left.\left[\log\left(\frac{1 + \sqrt{1-4(\delta+P_{\min})(1-P_{\min})}}{2(\delta +P_{\min})}\right), \infty\right.\right).
\end{aligned}
\end{equation*}
\end{small}

Case 2: $\delta \ge \frac{1}{4(1-P_{\min})}-P_{\min}$
\begin{equation*}
    \epsilon \in [0, \infty].
\end{equation*}
\end{rmk}
In Figure 1, we plot $\epsilon-\delta$ tradeoff for different $P_{\min}$ and the upper bound and lower bound of valid epsilon according to the result in Theorem 2 and Remark 2 respectively. 

Specifically, in Figure \ref{ep.vs_delta}, we let $P_{\min}$ to be $0.3$, $0.1$ and $0.01$, respectively, and showcase the range of $\delta$ when picking an $\epsilon$ that varies from  $0$ to $6$. In Figure \ref{pmin_vs_ep.}, we  highlight the infeasible region of $\epsilon$ in the plot. We vary the value of $P_{\min} \in[0.001, 0.5]$ for $\delta = 0, 0.1$ and $0.01$ respectively.  Observe that enlarging $\delta$ can effectively reduce the infeasible region of $\epsilon$. 

Observe that the valid region of $\epsilon$ contains two parts, and in the following, we focus on the more practical low privacy regime where 
$$\epsilon\in \left.\left[\log\left(\frac{1 + \sqrt{1-4(\delta+P_{\min})(1-P_{\min})}}{2(\delta +P_{\min})}\right), \infty\right.\right).$$
We next present an effective algorithm to achieve $(\epsilon,\delta)$-LIP for a given $P_{X}$. The intuition comes from the following remark that highlights the relationship between the minimum $P_{\min}$ and a pair of $(\epsilon,\delta)$.

\begin{rmk}
By Theorem 1 and Theorem 2, to satisfy $(\epsilon,\delta)$-LIP for $\epsilon$ in the low privacy regime, the lower bound of $P_{\min}$ becomes 
\begin{equation}\label{eq:p_min}
    P_{\min} \ge \frac{1}{e^{\epsilon} +1} - \frac{\delta e^{2\epsilon}}{e^{\epsilon} - 1}.
\end{equation}
\end{rmk}

Note that to achieve $(\epsilon,\delta)$-LIP, a feasible approach is to artificially increase $P_{\min}$ until $P_{\min}$ is over the threshold as a function of $\epsilon$ and $\delta$. To this end, we propose the following grouping based algorithm, which combines inputs with small priors to lift $P_{\min}$, in order to achieve $(\epsilon,\delta)$-LIP for small $\epsilon$s.  We next describe the key steps in the algorithm. We take out original input PMF which has the alphabet $X$; and then create a new PMF $P'$ over a reduced alphabet $X'$; To this end, we consider a tunable parameter $\ell$, and combine $\ell$ inputs with smallest probabilities and denote this as a new ``grouped" input denoted as $x'$. The algorithm is described in Algorithm 1. 

 \begin{algorithm}[t]
\caption{Grouping based context-aware RR achieving $(\epsilon,\delta)$-LIP}
\hspace*{\algorithmicindent}
\textbf{Input:} $\epsilon$, $\delta$, $P_X$,  $\ell\in[1, |\mathcal{X}|]$.\\
 \hspace*{\algorithmicindent} \textbf{Output:} $Y$
\begin{algorithmic}[1]
    \State Sort the values in $\mathcal{X}$ according to their probabilities in an non-decreasing manner and obtain the set of instances:
    $$\{x_1, x_2,...,x_{|\mathcal{X}|}\};$$
    \State Group elements $\{x_1,...x_{\ell}\}$ into a new instance $x'$, with $$P_X(x') =\sum_{i=1}^{\ell} P_X(x_i);$$
    \State Update the support of $X$: $\mathcal{X}'\gets \{x', x_{\ell+1},...,x_{|\mathcal{X}|}\};$
    \State Release $x$ as follows:
    \If {$x \in \{x_1,...,x_{\ell}\}$}
    \State Release 
    \begin{equation*}
        Y = \begin{dcases}
          x', &~\text{w.p.}~ 1- \frac{(1-P_X(x'))}{e^{\epsilon}};\\
          y\neq {x'}, & ~\text{w.p.}~ \frac{P_X(y)}{e^{\epsilon}}.
        \end{dcases}
    \end{equation*}
    \Else
    \State Release 
    \begin{equation*}
        Y = \begin{dcases}
          x, &~\text{w.p.}~ 1- \frac{(1-P_X(x))}{e^{\epsilon}};\\
          y\neq {x}, & ~\text{w.p.}~ \frac{P_X(y)}{e^{\epsilon}}.
        \end{dcases}
    \end{equation*}
    \EndIf

    \State \Return $Y$
\end{algorithmic}
\label{algo:2}
\end{algorithm}



\begin{thm}
    The algorithm that takes $X\in\mathcal{X}$ as input and releases $Y\in\mathcal{X}'$ achieves $(\epsilon,\delta)$-LIP, where 
    \begin{equation*}
    \delta \ge \frac{(e^{\epsilon} -1)[1 - P'_{\min}(e^{\epsilon} + 1)]}{e^{2\epsilon}},
\end{equation*}
and $P'_{\min} = \min\{P_X(x'), P_X(x_{l + 1})\}$.
\end{thm}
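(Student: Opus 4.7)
The plan is to reduce the statement to Theorem 2 applied on a suitably coarsened probability space. First, I would introduce the auxiliary random variable $X' = \phi(X)$ via the grouping map $\phi:\mathcal{X}\to\mathcal{X}'$ that sends every element of $\{x_1,\ldots,x_{\ell}\}$ to the new symbol $x'$ and acts as the identity on $\{x_{\ell+1},\ldots,x_{|\mathcal{X}|}\}$. The pushforward of $P_X$ under $\phi$ is exactly the reduced prior $P'_X$ used by Algorithm \ref{algo:2}, namely $P'_X(x')=\sum_{i=1}^{\ell}P_X(x_i)$ and $P'_X(x_i)=P_X(x_i)$ for $i>\ell$.

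Next, I would observe that the two branches of the If/Else in Algorithm \ref{algo:2} are precisely the basic context-aware RR rule \eqref{rr-basic} applied on the reduced alphabet $\mathcal{X}'$ with prior $P'_X$ and input $X'=\phi(X)$. In particular, $P(Y\in S_y\mid X=x)=P(Y\in S_y\mid X'=\phi(x))$ for every $x\in\mathcal{X}$, and the marginal $P(Y\in S_y)$ is identical whether computed under $P_X$ or under $P'_X$. Applying Theorem 2 to this reduced system then yields $(\epsilon,\delta)$-LIP of $Y$ with respect to $X'$, with the admissible $\delta$ governed by $P'_{\min}=\min_{\tilde{x}\in\mathcal{X}'}P'_X(\tilde{x})$. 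Because the inputs were sorted in non-decreasing order of probability before grouping, $P_X(x_{\ell+1})$ is the smallest among $\{P_X(x_{\ell+1}),\ldots,P_X(x_{|\mathcal{X}|})\}$, so $P'_{\min}=\min\{P_X(x'),P_X(x_{\ell+1})\}$, which matches the quantity in the statement.

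The last step is to transfer the LIP guarantee from $X'$ back to $X$. For any singleton $\{x\}\subseteq\mathcal{X}$, the identity $P(Y\in S_y\mid X=x)=P(Y\in S_y\mid X'=\phi(x))$ turns the singleton version of $(\epsilon,\delta)$-LIP for $X'$ into the singleton version of Definition \ref{def:LIP1} for $X$, with the same constants. I would then extend from singletons to arbitrary $S_x\subseteq\mathcal{X}$ by writing
\begin{equation*}
P(Y\in S_y\mid X\in S_x)=\sum_{x\in S_x}\frac{P_X(x)}{P(X\in S_x)}\,P(Y\in S_y\mid X=x),
\end{equation*}
and noting that both inequalities in \eqref{cons0}, once rearranged into the form $P(Y\in S_y\mid X=x)\lessgtr \alpha\,P(Y\in S_y)+\beta$ with $\alpha\in\{e^{\epsilon},e^{-\epsilon}\}$ and $\beta$ linear in $\delta$, are preserved under convex combinations over $x\in S_x$.

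The main obstacle I anticipate is this final transfer step, specifically in handling subsets $S_x$ that only partially overlap the grouped set $\{x_1,\ldots,x_{\ell}\}$, where one cannot directly lift $S_x$ to a subset of $\mathcal{X}'$. The convex-combination argument handles this cleanly, but care is required when tracking the direction of each inequality to ensure that the multiplicative factor $e^{\pm\epsilon}$ and the additive slack $\delta$ are preserved exactly rather than incurring an extra factor of $e^{\pm\epsilon}$ on $\delta$.
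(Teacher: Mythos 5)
Your proposal is correct and is essentially the argument the paper intends: the paper provides no separate proof of Theorem 3, treating it as an application of Theorem 2 to the context-aware RR mechanism on the reduced alphabet $\mathcal{X}'$ with the pushforward prior, which is exactly your reduction (and your identification of $P'_{\min}=\min\{P_X(x'),P_X(x_{\ell+1})\}$ via the sorting step matches the statement). Your explicit transfer from $X'=\phi(X)$ back to $X$ --- using that the release distribution depends on $X$ only through $\phi(X)$ and that the rearranged inequalities are preserved under convex combinations over $x\in S_x$ --- soundly fills in the one detail the paper leaves implicit.
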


\begin{rmk}
    Note that Algorithm 1 groups realizations of $X$ with small priors. The number of elements in the group depends on a tunable parameter $\ell$: a larger $\ell$ significantly increases $P_{\min}$, thereby achieving a smaller $\epsilon$ as the privacy parameter. However, a large $l$ could lead to collisions between a large number of inputs, making them indistinguishable in the output, which may result in reduced utility. Therefore, the parameter $\ell$ can be optimized to maximize utility for a given application
\end{rmk}
Next, we use the following example to illustrate the implementation of Algorithm 1.

\begin{exmpl}
Consider the input data $X$ has a support of $\{1, 2, 3, 4, 5\}$, and the prior distribution of $X$ is $P_X(1) = P_X(2) = 0.05$,  $P_X(3) = 0.2$, $P_X(4) = 0.3$, $P_X(5) = 0.4$. We next look into four cases on different mechanisms corresponding to pure LIP or $(\epsilon,\delta)$-LIP. For each case, we derive the minimal value of $\epsilon$.

\textbf{Case 1, Pure LIP with context-aware RR:}

With the original distribution, to achieve $(\epsilon,0)$-LIP, the minimal value of $\epsilon$ one can pick is:
$$\epsilon\ge \log(1/0.05 - 1) \approx 2.94,$$ corresponding to a $P_{\min} = 0.05$. 

\textbf{Case 2, Pure LIP with the group algorithm:}

When $\ell =2$, $X= 1$ and $X =2$ are grouped together, and their priors are combined to lift $P_{\min}$ to $0.1$. Now, to achieve $(\epsilon,0)$-LIP, the minimal $\epsilon$ becomes:
$$\epsilon\ge \log(1/0.1 - 1) \approx 2.19.$$

When  $\ell = 3$, $X=1$, $X=2$ and $X=3$ are grouped together, making $P_{\min} = 0.3$. Then, to achieve $(\epsilon,0)$-LIP, the minimal $\epsilon$ becomes:
$$\epsilon\ge \log(1/0.3 - 1) \approx 0.84,$$

\textbf{Case 3, Approximate LIP with context-aware RR:}

With the original distribution, to achieve $(\epsilon,0.01)$-LIP, the minimal value of $\epsilon$ one can pick is:
$$\epsilon\ge \log\left(\frac{1+\sqrt{1-4(0.01 + 0.05) (1-0.05)}}{2(0.01 + 0.05)}\right) \approx 2.75,$$ corresponding to a $P_{\min} = 0.05$.

\textbf{Case 4, Approximate LIP with the group Algorithm:}

When $\ell = 2$, as shown in Case 2, $P_{\min}$ is increased to $0.1$, then to
 achieve $(\epsilon,0.01)$-LIP, the minimal value of $\epsilon$ becomes
$$\epsilon\ge \log\left(\frac{1+\sqrt{1-4(0.01 + 0.1) (1-0.1)}}{2(0.01 + 0.1)}\right) \approx 2.08,$$

When $\ell = 3$, $P_{\min}$ becomes $0.3$. Therefore,
 to achieve $(\epsilon,0.01)$-LIP, the minimal value of $\epsilon$ becomes 
$$\epsilon\ge \log\left(\frac{1+\sqrt{1-4(0.01 + 0.3) (1-0.3)}}{2(0.01 + 0.3)}\right) \approx 0.78.$$
\end{exmpl}


\section{Discussion}

\begin{enumerate}

\item \textit{{Impact of updated results on our original paper's experiments: }} {The analysis presented in this document focuses solely on the valid range of $\epsilon$ and does not alter any of the mechanism's parameters. Consequently, the experimental results from the original paper remain valid, with the only change being an updated $\epsilon$ range recalculated based on the analysis in this correction.}
    \item \textit{The choice of $\epsilon,\delta$ for local mechanisms in the wild: }The real-world implementation of local privatization models usually adopts large $\epsilon$s to provide good utility. For example, Apple picks $\epsilon = 8$ in their LDP-based Count Mean Sketch protocol for domain collection, and $\epsilon=4$ for emoji collection \cite{281394}. This does not significantly degrade the privacy guarantee, as they use an Encryption - Shuffling - Analysis (ESA) architecture. In this architecture, the shuffler, positioned in the middle, prevents the server from tracking the original data source and provides privacy amplification by a factor of $1/\sqrt{N}$, where $N$ denotes the number of users. We have theoretically shown that $2\epsilon$-LIP implies $\epsilon$-LDP in proposition 1 of \cite{LIP1}, which means LIP can also benefit from shuffling. Even though the original work on shuffled DP does not include an analysis for the relationship between $(\epsilon,\delta)$-LDP and CDP, the idea of amplifying the privacy guarantee through shuffling remains the same. Additionally, the tightness of this benefit for LIP remains an interesting area for future research.

    \item \textit{The rationale of collision in Algorithm 1:} A natural limitation of the mechanism of general randomized response (GRR) \cite{203872} is the suboptimal utility-privacy tradeoff for high-dimensional data, corresponding to a small $P_{\min}$ in context-aware models. In LDP research, GRR is usually combined with hashing techniques that reduce input data dimension at the cost of slight data input collision. This technique has been proven to achieve a better utility-privacy tradeoff than GRR for high-dimensional data. A natural extension of this idea to LIP is to design functions that combine inputs with small priors until $P_{\min}$ is greater than $1/(e^{\epsilon} + 1)$ for a given $\epsilon$. This operation guarantees privacy and utility concurrently.
    \end{enumerate}














\appendices

\section{Proof of Theorem 1.}

In this section, we proof Theorem 1, i.e., the context-aware RR mechanism achieves $(\epsilon,0)$- LIP, where 
   \begin{equation*}
       \epsilon \ge \log \left(\frac{1}{P_{\min}} -1\right).
   \end{equation*}

We first introduce the following lemma which is helpful in the proof.
\begin{lem}\label{lem1}
    With the context-aware RR, the marginal distribution of $Y$ is equivalent to the marginal distribution of $X$, i.e. $P_X(x) = P_Y(x)$, for all $x \in\mathcal{X}$.
\end{lem}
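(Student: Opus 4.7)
The plan is to prove the lemma by directly computing the marginal distribution of $Y$ via the law of total probability, using the explicit form of the transition probabilities $q_{xy}$ given in \eqref{rr-basic}.

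First I would write $P_Y(y) = \sum_{x \in \mathcal{X}} P_X(x)\, q_{xy}$ and split the sum into the diagonal term $x = y$ and the off-diagonal terms $x \neq y$. Substituting the two cases of \eqref{rr-basic} gives
\begin{equation*}
P_Y(y) = P_X(y)\left[1 - \frac{1 - P_X(y)}{e^{\epsilon}}\right] + \sum_{x \neq y} P_X(x) \cdot \frac{P_X(y)}{e^{\epsilon}}.
\end{equation*}

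Next I would factor $P_X(y)/e^{\epsilon}$ out of the off-diagonal sum and recognize $\sum_{x \neq y} P_X(x) = 1 - P_X(y)$, so the off-diagonal contribution becomes $P_X(y)(1 - P_X(y))/e^{\epsilon}$. This term cancels exactly with the negative $P_X(y)(1 - P_X(y))/e^{\epsilon}$ coming from the diagonal term, leaving $P_Y(y) = P_X(y)$, as desired.

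There is no real obstacle here; the lemma is essentially a sanity check that the self-loop probability $q_{yy}$ in \eqref{rr-basic} was designed precisely to make the marginal of $Y$ match $P_X$. The only thing I would double-check is that the diagonal probabilities $q_{yy}$ are valid probabilities (i.e., in $[0,1]$), which requires $1 - P_X(y) \le e^{\epsilon}$ for all $y$; this holds whenever $\epsilon \ge 0$ since $1 - P_X(y) \le 1$, so the computation is unconditionally valid on the parameter range under consideration.
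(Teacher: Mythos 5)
Your proof is correct and takes essentially the same route as the paper: both expand $P_Y(y)$ by total probability, use $\sum_{x\neq y}P_X(x)=1-P_X(y)$ to collapse the off-diagonal sum, and observe that the resulting term cancels against the diagonal contribution. The added sanity check that $q_{yy}\in[0,1]$ for $\epsilon\ge 0$ is fine but not needed for the lemma itself.
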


\begin{proof}
\begin{equation*}
\begin{aligned}
    &P_Y(y)\\
    = &P_{Y|X}(y|y) P_{X}(y) +\sum_{x\neq y} P_{Y|X} (y|x) P_X(x)\\
    =& \frac{e^{\epsilon}-1+P_X(y)}{e^{\epsilon}} P_X(y) + \sum_{x\neq y} \frac{P_X(y)P_X(x)}{e^{\epsilon}}\\
    =&\frac{e^{\epsilon}-1+P_X(y)}{e^{\epsilon}} P_X(y) +  \frac{P_X(y)(1-P_X(y))}{e^{\epsilon}}\\
    =&\frac{e^{\epsilon}P_X(y)-P_X(y)+P_X(y)P_X(y)}{e^{\epsilon}}  +  \frac{P_X(y)-P_X(y)P_X(y)}{e^{\epsilon}}\\
    =&P_X(y).
\end{aligned}
\end{equation*}
\end{proof}

For $(\epsilon,0)$-LIP, Definition 1 can be further simplified to, for all $y,x\in\mathcal{X}$:
\begin{equation*}
    e^{-\epsilon } \le \frac{P_Y(y)}{P_{Y|X}(y|x)} \le e^{\epsilon}.
\end{equation*}

\textbf{Case 1:} When $x\neq y$:
the privacy metric $\frac{P_Y(y)}{P_{Y|X}(y|x)}$ becomes:
\begin{equation*}
    \frac{P_X(y)}{P_{Y|X}(y|x)} = \frac{P_X(y)}{P_X(y)e^{-\epsilon}} \overset{\Delta}{=} e^{\epsilon},
\end{equation*}
which is guaranteed to be bounded by $[e^{-\epsilon}, e^{\epsilon}]$.

\textbf{Case 2:} When $x = y$:
the privacy metric $\frac{P_Y(y)}{P_{Y|X}(y|x)}$ becomes:
\begin{equation*}
    \frac{P_X(y)}{P_{Y|X}(y|x)} = \frac{P_X(y)}{1 - \frac{(1-P_X(y))}{e^{\epsilon}}} = \frac{P_X(y)e^{\epsilon}}{e^{\epsilon} - 1+P_X(y)}.
\end{equation*}
Next, we compare with the upper and lower bound respectively.

Comparing with the upper bound:
\begin{equation*}\label{upper}
    \begin{aligned}
    e^{\epsilon} -\frac{P_X(y)e^{\epsilon}}{e^{\epsilon} - 1+P_X(y)}
    =&\frac{[e^{\epsilon} - 1+P_X(y) - P_X(y)]e^{\epsilon}}{e^{\epsilon} - 1+P_X(y)}\\
    =& \frac{(e^{\epsilon} - 1)e^{\epsilon}}{e^{\epsilon} - 1+P_X(y)} \ge 0.
    \end{aligned}
\end{equation*}
Comparing with the lower bound:
\begin{equation*}
\begin{aligned}
    &\frac{P_X(y)e^{\epsilon}}{e^{\epsilon} - 1+P_X(y)} - e^{-\epsilon}\\
    =&\frac{P_X(y)e^{\epsilon}  - e^{-\epsilon}(e^{\epsilon} - 1+P_X(y))}{e^{\epsilon} - 1+P_X(y)}\\
    =&\frac{P_X(y)e^{\epsilon}  - 1 + e^{-\epsilon}-P_X(y)e^{-\epsilon}}{e^{\epsilon} - 1+P_X(y)}\\
    =&\frac{P_X(y)e^{\epsilon}(1- e^{-2\epsilon})  - (1 - e^{-\epsilon})}{e^{\epsilon} - 1+P_X(y)}\\
    =&\frac{P_X(y)e^{\epsilon}(1- e^{-\epsilon})(1+e^{-\epsilon})  - (1 - e^{-\epsilon})}{e^{\epsilon} - 1+P_X(y)}\\
    =&\frac{(1- e^{-\epsilon})[P_X(y)e^{\epsilon}(1+e^{-\epsilon})  - 1]}{e^{\epsilon} - 1+P_X(y)}.\\
\end{aligned}
\end{equation*}
Note that $(1-e^{-\epsilon}) \ge 0$, $e^{\epsilon} - 1+P_X(y) \ge 0$, therefore, the lower bound holds when:
\begin{equation*}
    P_X(y)e^{\epsilon}(1+e^{-\epsilon})  - 1 \ge 0,~~ \text{or}~~ 1-e^{-\epsilon} = 0
\end{equation*}
which implies for all $y \in \mathcal{X}$
\begin{equation*}\label{eq:lower_bound}
    \begin{aligned}
    P_X(y)  \ge \frac{1}{e^{\epsilon} +1}, ~~ \text{or}~~ \epsilon = 0.
    \end{aligned}
\end{equation*}
 This further implies that the mechanism achieves $\epsilon$-LIP if for the $\min_{y\in\mathcal{X}} P_X(y)$, denoted as $P_{\min}$ in the following:
\begin{equation*}
P_{\min} \ge  \frac{1}{e^{\epsilon} +1}. 
\end{equation*}
Therefore, the mechanism achieves $\epsilon$-LIP, when 
\begin{equation*}
    \epsilon \ge \log \left(\frac{1}{P_{\min}} -1\right).
\end{equation*}

\section{Proof of Theorem 2 and Remark 2}
In this section, we proof Theorem 2, i.e., the context-aware RR mechanism achieves $(\epsilon,\delta)$-LIP, where
\begin{equation}
    \delta \ge \frac{(e^{\epsilon} -1)[1 - P_{\min}(e^{\epsilon} + 1)]}{e^{2\epsilon}}.
\end{equation}
and the corresponding valid region of $\epsilon$.
\begin{proof}
When $y\neq x$, the two inequalities in definition 1 become:

\begin{equation*}
\begin{aligned}
P_Y(y) \ge e^{-\epsilon} P_{Y|X}(y|x) - \delta
\Leftrightarrow &P_X(y) \ge P_X(y)e^{-2\epsilon} - \delta\\
\Leftrightarrow &\delta \ge \max\{0, P_X(y)(e^{-2\epsilon}-1)\}\\
\Leftrightarrow &\delta \ge 0.\\
\end{aligned}
\end{equation*}
and 
\begin{equation*}
\begin{aligned}
P_Y(y) \le e^{\epsilon} P_{Y|X}(y|x) + \delta
\Leftrightarrow &P_X(y) \le P_X(y) + \delta\\
\Leftrightarrow &\delta \ge 0.\\
\end{aligned}
\end{equation*}

This implies that when $y\neq x$, both conditions are satisfied for any $\delta \ge 0$.

When $y = x$, the right hand side inequality in Definition 1 can be simplified to:
\begin{equation*}
\begin{aligned}
P_Y(y) \le e^{\epsilon} P_{Y|X}(y|x) + \delta
\Leftrightarrow &P_X(y) \le e^{\epsilon} - 1 +P_X(y) + \delta\\
\Leftrightarrow &\delta \ge \max\{0, 1 - e^{\epsilon}\}\\
\Leftrightarrow &\delta \ge 0.
\end{aligned}
\end{equation*}
This implies that when $y= x$, both right hand side condition is satisfied for any $\delta \ge 0$.
On the other hand, for the left hand side
\begin{equation*}
P_Y(y)\ge P_{Y|X} (y|x) e^{-\epsilon} -\delta,
\end{equation*}
we have 
\begin{equation*}
    \begin{aligned}
    \delta \ge & P_{Y|X} (y|x) e^{-\epsilon} - P_Y(y)\\
    = &\frac{e^{\epsilon} - 1 + P_X(y)}{e^{2\epsilon}} - P_X(y)\\
    = & \frac{(e^{\epsilon} -1)[1 - P_{X}(y)(e^{\epsilon} + 1)]}{e^{2\epsilon}},
    \end{aligned}
\end{equation*}
which implies that 
\begin{equation*}
    \delta \ge \frac{(e^{\epsilon} -1)[1 - P_{X}(y)(e^{\epsilon} + 1)]}{e^{2\epsilon}} \ge \frac{(e^{\epsilon} -1)[1 - P_{\min}(e^{\epsilon} + 1)]}{e^{2\epsilon}}.
\end{equation*}
This further implies that 
\begin{equation}\label{eq:ep_con}
    (\delta + P_{\min})e^{2\epsilon} - e^{\epsilon} + 1 - P_{\min} \ge 0.
\end{equation}
As $\delta + P_{\min} >0$, when 
$1-4(\delta + P_{\min})(1-P_{\min})\le 0 $, i.e., $\delta > \frac{1}{4(1-P_{\min})}-P_{\min}$, condition in \eqref{eq:ep_con} holds for every $\epsilon\ge0$.
When $1-4(\delta + P_{\min})(1-P_{\min})> 0 $, i.e., $\delta \le \frac{1}{4(1-P_{\min})}-P_{\min}$,
condition in \eqref{eq:ep_con} holds when 
\begin{equation*}
   \epsilon\in \left[0,\log\left(\frac{1 - \sqrt{1-4(\delta+P_{\min})(1-P_{\min})}}{2(\delta +P_{\min})}\right)\right],
\end{equation*}
or 
\begin{equation*}
    \epsilon\in \left.\left[\log\left(\frac{1 + \sqrt{1-4(\delta+P_{\min})(1-P_{\min})}}{2(\delta +P_{\min})}\right), \infty\right.\right).
\end{equation*}


We next show that $\log\left(\frac{1 - \sqrt{1-4(\delta+P_{\min})(1-P_{\min})}}{2(\delta +P_{\min})}\right)\ge 0$ for any $P_{\min}\in[0,1]$ and $\delta\in[0,1]$, which is equivalent to show that:
\begin{equation}\label{eq:condition}
    \begin{aligned}
    &\frac{1 - \sqrt{1-4(\delta+P_{\min})(1-P_{\min})}}{2(\delta +P_{\min})}\ge 1.\\
    \Leftrightarrow& 1- 2(\delta +P_{\min}) \ge \sqrt{1-4(\delta+P_{\min})(1-P_{\min})}.\\
    \end{aligned}
\end{equation}

Note that as $\delta \le \frac{1}{4(1-P_{\min})}-P_{\min}$, 
\begin{equation*}
    \begin{aligned}
    1-2(\delta + P_{\min}) \le 1-\frac{1}{2(1-P_{\min})},
    \end{aligned}
\end{equation*}
and \begin{equation*}
    1-\frac{1}{2(1-P_{\min})}\ge 0,
\end{equation*}
when $P_{\min}\le 1/2$. This is true for any $X$ with support containing at least two instances.
When $1-2(\delta + P_{\min})\ge 0$, to satisfy the condition in \eqref{eq:condition}, it is equivalent to show that:
\begin{equation}
\begin{aligned}
    &(1- 2(\delta +P_{\min}))^2 \ge 1-4(\delta+P_{\min})(1-P_{\min})\\
    \Leftrightarrow&P_{\min}(P_{\min} + \delta) \ge 0,\\
\end{aligned}
\end{equation}
which holds for any $P_{\min}\in[0,1]$ and $\delta\in[0,1]$, 
This completes the proof of Theorem 2 and remark 2.
\end{proof}

\end{document}